\documentclass[11pt]{article}

\usepackage[utf8]{inputenc}
\usepackage[T1]{fontenc}
\usepackage{lmodern}
\usepackage[margin=1in]{geometry}
\usepackage{amsmath,amssymb,amsthm}
\usepackage{booktabs}
\usepackage{microtype}
\usepackage[hidelinks]{hyperref}
\usepackage{cleveref}

\makeatletter
\renewcommand\paragraph{\@startsection{paragraph}{4}{\z@}%
  {3.25ex \@plus1ex \@minus.2ex}{-1em}%
  {\normalfont\normalsize\itshape}}
\makeatother

\title{Nested Clustered Optimization Is One End of a Schur Bridge,\\
and the Interior Is Sometimes Provably Better}
\author{Peter Cotton\thanks{Microprediction.
Email: \texttt{peter.cotton@microprediction.com}. A companion
verification script is \texttt{paper/verify\_schur\_nco\_bridge.py}
at \texttt{github.com/microprediction/schur}.}}
\date{September 17, 2026}

\newcommand{\ones}{\mathbf{1}}
\DeclareMathOperator{\diag}{diag}
\DeclareMathOperator{\Cov}{Cov}
\DeclareMathOperator{\Var}{Var}

\newtheorem{proposition}{Proposition}
\newtheorem{remark}{Remark}
\newtheorem{definition}{Definition}
\newtheorem{corollary}{Corollary}

\begin{document}
\maketitle

\begin{abstract}
Nested clustered optimization allocates within each cluster from the
cluster's own covariance block and then across the resulting cluster
portfolios. Block inversion says the unconstrained minimum-variance
portfolio has the same two-tier shape, with each block replaced by its Schur
complement against every other asset. Conditioning instead on one knot from each other cluster truncates the
conditioning set in the manner of a Vecchia approximation, and we give the rank-one model of cross-cluster dependence under which it is
exact. Damping the complement by $\gamma\in[0,1]$ then gives a bridge with
nested clustered optimization at $\gamma=0$ and the global optimum at
$\gamma=1$, with no linear solve larger than a cluster or the number of
clusters. Under estimation error the optimal $\gamma$ can be strictly
interior and full coupling can remain optimal, and we give the local
theorem at the minimum-variance end with exact examples of both, including
a symmetric family in which the optimum is a closed form.
\end{abstract}

\section{Introduction}

Two-stage cluster methods partition the assets, allocate inside each
cluster, and then allocate across clusters. In nested clustered
optimization, NCO \cite{lopezdeprado2020}, the inner step optimizes
each cluster on its own covariance block and the outer step optimizes across
the resulting cluster portfolios. The outer problem has one dimension per
cluster, which is the source of the method's numerical stability.

The inner step reads only the cluster's own block. The only channel by which
one cluster's weights respond to another cluster is its scalar budget.

Schur-complementary allocation \cite{cotton2024schur} removes the same
blind spot from hierarchical risk parity \cite{lopezdeprado2016}. At
each split of a bisection tree it replaces a block by its Schur complement
against the sibling block, damped by $\gamma\in[0,1]$, and carries a
companion vector alongside. The result is a bridge with hierarchical risk
parity at $\gamma=0$ and minimum variance at $\gamma=1$.

This note builds the corresponding bridge for a flat partition. The left
end is NCO and the right end is the unconstrained global minimum-variance
portfolio. The obstacle is cost, since the complement of a cluster against
everything else needs an inverse of nearly full size.

Spatial statistics meets the same obstacle by conditioning on a few chosen
indices rather than on everything \cite{vecchia1988}. The conditionals
there are Schur complements, so the device transfers directly.

We borrow its vocabulary. One member of each cluster is its \emph{knot}, as
in the low-rank spatial models of Banerjee et al.~\cite{banerjee2008} that
Katzfuss and Guinness~\cite{katzfuss2021} place inside the Vecchia framework. The other
clusters' knots are a cluster's \emph{conditioning set}. We state the model
under which conditioning on that set loses nothing.

\Cref{sec:where} asks where to sit on the bridge when the covariance is
estimated. Part of the analysis of the tree in \cite{cotton2026bridge}
transfers, and exact examples show that both an interior optimum and a
locally optimal full coupling occur. \Cref{sec:rank} treats singular
covariances, where the parameterization by $\gamma$ itself becomes
singular.

All exactness statements concern the unconstrained, fully invested,
long-short problem. With box or long-only constraints no block
decomposition of the optimum is exact, here or on the tree.

\section{Setup}

Let $\Sigma\succ0$ be the covariance of $n$ assets partitioned into
clusters $I_1,\dots,I_k$. Cluster $i$ has a knot $p_i\in I_i$ and
remaining members $J_i=I_i\setminus\{p_i\}$. Write $P=\{p_1,\dots,p_k\}$
and $P_{-i}=P\setminus\{p_i\}$. For index sets $X,Y$ write $\Sigma_{XY}$ for the
corresponding block, $\Sigma_{ii}$ for the block of cluster $i$, and $-i$
for the complement of $I_i$.

A two-stage method chooses a portfolio $v_i$ for each cluster, supported on
$I_i$ with $\ones^\top v_i=1$, and then budgets across the clusters. With
$V$ the $n\times k$ matrix of cluster portfolios,
\begin{equation}
  a \;=\; \operatorname{alloc}\!\big(V^\top\Sigma V\big),
  \qquad w \;=\; Va .
  \label{eq:twostage}
\end{equation}
NCO takes $v_i$ to be the minimum-variance portfolio of $\Sigma_{ii}$ and
minimum variance for the outer allocation.

\section{Block inversion}

Fix a vector $u$ and a cluster $i$. Block inversion of $\Sigma$ on the
partition $(I_i,-i)$ gives \cite[App.~A]{cotton2024schur}
\begin{equation}
  \big(\Sigma^{-1}u\big)_{I_i}
  \;=\; \big(\Sigma^{c}_{ii}\big)^{-1} b_i,
  \label{eq:blockinv}
\end{equation}
where
\begin{equation}
  \Sigma^{c}_{ii} = \Sigma_{ii}-\Sigma_{i,-i}\Sigma_{-i,-i}^{-1}\Sigma_{-i,i},
  \qquad
  b_i = u_{I_i}-\Sigma_{i,-i}\Sigma_{-i,-i}^{-1}u_{-i} .
  \label{eq:complement}
\end{equation}
For a positive-definite $Q$ and a vector $b$ we call $Q^{-1}b$ a
\emph{generalized minimum-variance direction}. Up to scale it minimizes
$x^\top Qx$ subject to $b^\top x=1$, which is ordinary minimum variance only
when $b=\ones$.

With $u=\ones$, the blocks of the global minimum-variance direction
$\Sigma^{-1}\ones$ are therefore generalized minimum-variance directions,
one per cluster, on the cluster's covariance conditional on everything
else. There is no cross-cluster optimizer in the global optimum. The
cluster totals are whatever the stacked directions sum to, and every piece
of cross-cluster information enters through the conditioning.

NCO has the same shape with the conditioning removed. Restoring it costs an
inverse of size $n-|I_i|$ per cluster, which is what a two-stage method
exists to avoid.

\section{The gateway model}

\begin{definition}[Gateway model]
For every cluster $j$, the residual of the remaining members $J_j$ after linear
regression on their own knot is uncorrelated with every asset outside
$I_j$. Equivalently the cross-cluster residual
\begin{equation}
  R_j(p_j) \;=\; \Sigma_{J_j,-j}-\Sigma_{J_j p_j}\,\sigma_{p_j}^{-2}\,\Sigma_{p_j,-j}
  \label{eq:residual}
\end{equation}
is zero.
\end{definition}

A member may be correlated with anything, but its correlation with other
clusters passes entirely through its knot. The knot is the cluster's only
gateway. Each cross-cluster block of $\Sigma$ then has rank at most one.

This is a strong assumption, and clustering alone does not imply it. It
asks that member residuals be uncorrelated both with the other knots and
with the other clusters' member residuals.

\begin{proposition}[Knots are sufficient]
\label{prop:sufficient}
Under the gateway model, for each cluster $i$ and every $u$,
\begin{equation}
  \Sigma^{c}_{ii}
  \;=\; \Sigma_{ii}-\Sigma_{i,P_{-i}}\,\Sigma_{P_{-i}P_{-i}}^{-1}\,\Sigma_{P_{-i},i},
  \qquad
  b_i \;=\; u_{I_i}-\Sigma_{i,P_{-i}}\,\Sigma_{P_{-i}P_{-i}}^{-1}\,u_{P_{-i}} .
  \label{eq:cheap}
\end{equation}
\end{proposition}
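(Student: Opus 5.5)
The plan is to show that the regression coefficient matrix $A=\Sigma_{i,-i}\Sigma_{-i,-i}^{-1}$ of cluster $i$ on everything else puts its weight only on the knots $P_{-i}$. Both identities in \eqref{eq:cheap} then follow at once from \eqref{eq:complement}, since $\Sigma^c_{ii}=\Sigma_{ii}-A\Sigma_{-i,i}$ and $b_i=u_{I_i}-Au_{-i}$. Concretely, define the candidate $\tilde A$ on the columns indexed by $-i$. On the columns $P_{-i}$ set $\tilde A=\Sigma_{i,P_{-i}}\Sigma_{P_{-i}P_{-i}}^{-1}$, and on the columns $J_j$ for $j\neq i$ set it to zero. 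The block $\Sigma_{P_{-i}P_{-i}}$ is invertible as a principal submatrix of $\Sigma\succ0$. Since $\Sigma_{-i,-i}$ is invertible, $A$ is the unique solution of $A\Sigma_{-i,-i}=\Sigma_{i,-i}$, so it suffices to verify $\tilde A\Sigma_{-i,-i}=\Sigma_{i,-i}$ one column block at a time.

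For the columns $P_{-i}$ this is immediate: $\tilde A\Sigma_{-i,P_{-i}}=\Sigma_{i,P_{-i}}\Sigma_{P_{-i}P_{-i}}^{-1}\Sigma_{P_{-i}P_{-i}}=\Sigma_{i,P_{-i}}$. For the columns $J_j$ with $j\ne i$, we have $\tilde A\Sigma_{-i,J_j}=\Sigma_{i,P_{-i}}\Sigma_{P_{-i}P_{-i}}^{-1}\Sigma_{P_{-i},J_j}$, and we use the gateway model \eqref{eq:residual} for cluster $j$. Write $c_j=\sigma_{p_j}^{-2}\Sigma_{p_jJ_j}$. For any index $x\notin I_j$, the gateway condition gives $\Sigma_{x,J_j}=\Sigma_{x,p_j}c_j$. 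For $x=p_j$ the same identity holds trivially, since $\Sigma_{p_jp_j}c_j=\Sigma_{p_jJ_j}$. Because $P_{-i}\subset\{p_j\}\cup(\text{outside }I_j)$, it follows that $\Sigma_{P_{-i},J_j}=\Sigma_{P_{-i},p_j}c_j$. Hence $\tilde A\Sigma_{-i,J_j}=\Sigma_{i,P_{-i}}\Sigma_{P_{-i}P_{-i}}^{-1}\Sigma_{P_{-i},p_j}c_j=\Sigma_{i,p_j}c_j$, because $p_j\in P_{-i}$ and so that product selects the $p_j$ column. Since $I_i$ lies outside $I_j$, the gateway condition gives $\Sigma_{i,p_j}c_j=\Sigma_{i,J_j}$, as required.

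Thus $\tilde A=A$, and substituting into \eqref{eq:complement} yields \eqref{eq:cheap}. In the formula for $\Sigma^c_{ii}$, only the $P_{-i}$ rows of $\Sigma_{-i,i}$ are picked up; in the formula for $b_i$, only $u_{P_{-i}}$ is. The conceptual reading, which I would mention in a remark, is the $L^2$ one. The span of the variables outside $I_i$ is the orthogonal sum of the span of the knots $P_{-i}$ and the span of the member residuals $e_j=X_{J_j}-c_j^\top X_{p_j}$. Each residual is orthogonal to the knots and to $X_{I_i}$, so projecting $X_{I_i}$ onto everything outside $I_i$ is the same as projecting it onto the knots. The only delicate step is the bookkeeping in the $J_j$ column block. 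One must check that the gateway identity covers every row of $P_{-i}$, including $p_j$ itself, which is the trivial case handled above. It must also apply to the rows of $I_i$, which holds because $j\ne i$. Notably, no orthogonality between residuals of different clusters is needed for this proposition.
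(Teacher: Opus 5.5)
Your proof is correct, but it works differently from the paper's. The paper argues with regressions. It replaces the regressors $(P_{-i},J_{-i})$ by $(P_{-i},\varepsilon)$ through an invertible map, notes that under the gateway model $\varepsilon$ is uncorrelated with cluster $i$ and with $P_{-i}$, and concludes that the coefficients on $\varepsilon$ vanish. You instead write down the candidate $\tilde A$ and check the normal equations $\tilde A\Sigma_{-i,-i}=\Sigma_{i,-i}$ one column block at a time, with uniqueness coming from the invertibility of $\Sigma_{-i,-i}$. The paper's argument is shorter and reads the result as a projection onto an orthogonal decomposition; your $L^2$ remark reproduces it. Yours is pure matrix algebra, and it shows exactly which entries of the gateway condition are used. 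Only the rows $P_{-i}$ and $I_i$ of each column block $J_j$ enter, and the row $p_j$ holds trivially. In particular, the blocks $\Sigma_{J_jJ_l}$ with $j,l\neq i$ never enter.

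That last fact is the correct version of your closing sentence, but the sentence as written is false. In the step $\Sigma_{i,p_j}c_j=\Sigma_{i,J_j}$ you apply the gateway condition for cluster $j$ to a row $x\in J_i$, which says $\Cov(X_x,e_j)=0$. Write $X_x$ as its regression on $X_{p_i}$ plus a residual $e_{i,x}$, and use $e_j\perp X_{p_i}$. The condition then becomes exactly $\Cov(e_{i,x},e_j)=0$, which is orthogonality between residuals of clusters $i$ and $j$. Your phrase ``orthogonal to $X_{I_i}$'' in the $L^2$ reading relies on the same thing.

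What is not needed, for a fixed $i$, is orthogonality between the residuals of two clusters both different from $i$. Since the proposition is claimed for every $i$, all pairwise residual orthogonalities are used. The paper's remark on testing the model makes this point, and a two-cluster example confirms it. Take independent unit-variance knots and unit-variance members $m_1,m_2$ uncorrelated with both knots but with $\Cov(m_1,m_2)=\rho\neq0$. Then \eqref{eq:cheap} returns $\Sigma_{11}$ for cluster $1$, because $\Sigma_{1,P_{-1}}=0$. The true complement, however, lowers the variance of $m_1$ to $1-\rho^2$.
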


\begin{proof}
The matrix $\Sigma_{i,-i}\Sigma_{-i,-i}^{-1}$ holds the coefficients of the
linear regression of cluster $i$ on all other assets. Split the regressors
into the other knots $P_{-i}$ and the other remaining members $J_{-i}$, and write
$J_{-i}=\beta\,P_{-i}+\varepsilon$ for the block-diagonal regression of
each set of remaining members on its own knot. The map from $(P_{-i},J_{-i})$ to
$(P_{-i},\varepsilon)$ is invertible, so the regression may be run on the
latter pair. Under the gateway model $\varepsilon$ is uncorrelated with
cluster $i$ and with $P_{-i}$, so its coefficients vanish and the
coefficients on $P_{-i}$ are $\Sigma_{i,P_{-i}}\Sigma_{P_{-i}P_{-i}}^{-1}$.
Substituting into \eqref{eq:complement} gives \eqref{eq:cheap}, with
$u_{-i}$ contributing only through its $P_{-i}$ entries.
\end{proof}

Conditioning cluster $i$ on the other knots is the same as conditioning it
on everything. The inverse in \eqref{eq:cheap} is $(k-1)\times(k-1)$.

\begin{remark}[Relation to Vecchia approximations]
A Vecchia approximation orders the variables and replaces each conditioning
set by a small subset of the predecessors, which defines one acyclic
approximate joint law \cite{vecchia1988,katzfuss2021}. The conditionals
are Schur complements. Replacing $-i$ by $P_{-i}$ in \eqref{eq:complement}
is the same truncation of a conditioning set, but it is applied to the full
conditional of every cluster at once. Those conditionals are mutually
cyclic and no approximate joint law is formed, so the recipe is
Vecchia-inspired rather than a Vecchia approximation.
\end{remark}

The gateway model itself does have an exact Vecchia form.

\begin{corollary}[Gaussian factorization]
\label{cor:gaussian}
If the returns $r$ are jointly Gaussian, the gateway model holds if and only
if
\begin{equation}
  f(r) \;=\; f(r_P)\,\prod_{i=1}^{k} f\big(r_{J_i}\mid r_{p_i}\big).
  \label{eq:dag}
\end{equation}
\end{corollary}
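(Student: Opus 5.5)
The plan is to rewrite both sides of \eqref{eq:dag} in Gaussian conditional-independence language and show that each is equivalent to the vanishing of the residuals $R_j(p_j)$ for all $j$. Because $r$ is jointly Gaussian, the conditional law of $r_{J_i}$ given any set of coordinates is Gaussian. Its mean is the linear regression on those coordinates and its covariance is the corresponding Schur complement. Uncorrelatedness and independence therefore coincide throughout.

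\emph{The gateway model implies the factorization.} Let $\varepsilon_i = r_{J_i}-\Sigma_{J_ip_i}\sigma_{p_i}^{-2}r_{p_i}$ be the residual of cluster $i$'s remaining members on their own knot.
\begin{enumerate}
\item The gateway model states $\Cov(\varepsilon_i, r_{-i})=R_i(p_i)^\top$-type blocks equal zero. So $\varepsilon_i$ is uncorrelated with every asset outside $I_i$. It is also uncorrelated with $r_{p_i}$ by construction.
\item Fix an ordering: first $r_P$, then $r_{J_1},\dots,r_{J_k}$. By the chain rule, $f(r)=f(r_P)\prod_i f(r_{J_i}\mid r_P, r_{J_1},\dots,r_{J_{i-1}})$.
\item The conditioning set in step 2 consists of $r_{p_i}$ together with variables outside $I_i$. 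Write $r_{J_i}=\Sigma_{J_ip_i}\sigma_{p_i}^{-2}r_{p_i}+\varepsilon_i$, where $\varepsilon_i$ is jointly Gaussian with, and uncorrelated with, all of those variables. Hence $\varepsilon_i$ is independent of the whole conditioning set.
\item It follows that the conditional law of $r_{J_i}$ given that set equals its law given $r_{p_i}$ alone, which is $f(r_{J_i}\mid r_{p_i})$. This gives \eqref{eq:dag}.
\end{enumerate}

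\emph{The factorization implies the gateway model.} Assume \eqref{eq:dag}.
\begin{enumerate}
\item Integrating out $r_{J_{-i}}$ from \eqref{eq:dag} gives $f(r_P,r_{J_i}) = f(r_P)\,f(r_{J_i}\mid r_{p_i})$. Hence $r_{J_i}\perp r_{P_{-i}}\mid r_{p_i}$.
\item More directly, \eqref{eq:dag} can be read as a joint law in which the blocks $r_{J_1},\dots,r_{J_k}$ are conditionally independent given $r_P$, with each depending only on its own knot. Therefore $r_{J_i}\perp (r_{P_{-i}},r_{J_{-i}})\mid r_{p_i}$, that is, $r_{J_i}\perp r_{-i}\mid r_{p_i}$.
\item For Gaussians this conditional independence means the conditional cross-covariance vanishes: $\Sigma_{J_i,-i}-\Sigma_{J_ip_i}\sigma_{p_i}^{-2}\Sigma_{p_i,-i}=0$. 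This is exactly $R_i(p_i)=0$ from \eqref{eq:residual}.
\end{enumerate}

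\textbf{Main obstacle.} The delicate step is justifying $r_{J_i}\perp r_{-i}\mid r_{p_i}$ from \eqref{eq:dag}, since step 1 alone only gives independence from the other knots. I would handle it by dividing \eqref{eq:dag} by the marginal $f(r_{-i}, r_{p_i})$, obtained by integrating out $r_{J_i}$. The factor $f(r_{J_i}\mid r_{p_i})$ integrates to one, and the remaining factors do not involve $r_{J_i}$. This shows $f(r_{J_i}\mid r_{-i},r_{p_i})=f(r_{J_i}\mid r_{p_i})$, and the Gaussian covariance identity then finishes the argument.
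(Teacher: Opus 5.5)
Your proof is correct and follows the paper's argument. Like the paper, you reduce both directions to the independence, under Gaussianity, of the knot residuals $\varepsilon_i=r_{J_i}-\beta_i r_{p_i}$ from everything outside $I_i$; your chain-rule ordering and the step that integrates out $r_{J_i}$ just spell out its one-line claims (and note that $\Cov(\varepsilon_i,r_{-i})$ is exactly $R_i(p_i)$, not a transpose of it).
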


\begin{proof}
Under the model the residuals $\varepsilon_i=r_{J_i}-\beta_ir_{p_i}$ are
uncorrelated with $r_P$ and with each other, hence independent under
Gaussianity, which gives \eqref{eq:dag}. Conversely \eqref{eq:dag} makes
$\varepsilon_i$ independent of every asset outside $I_i$.
\end{proof}

With the knots ordered first, \eqref{eq:dag} is a block-Vecchia
factorization, a sparser relative of the knot-based cases collected by
Katzfuss and Guinness~\cite{katzfuss2021}. The covariance results in this note do not need
Gaussianity.

\begin{remark}[Testing the model]
Regressing each remaining member on its own knot and the other knots, and
checking that the other-knot coefficients vanish, does not test the model.
Member residuals can be orthogonal to every knot and still be correlated
with each other across clusters, and the sufficiency of the knots then
fails. The model holds exactly when $R_j(p_j)=0$ for every $j$, which can
be checked from the sample covariance before any portfolio is built.
\end{remark}

\begin{remark}[Measuring the loss and choosing knots]
The raw size of $R_j$ does not measure what is lost when the model fails. A
residual with variance $\delta^2$ and covariance $c\delta$ with an outside
asset has $R=c\delta\to0$, while the Schur correction it carries, $c^2$,
does not move. The loss for cluster $i$ is the omitted correction
\begin{equation}
  \Delta_i \;=\; \Sigma_{i,-i}\Sigma_{-i,-i}^{-1}\Sigma_{-i,i}
   -\Sigma_{i,P_{-i}}\Sigma_{P_{-i}P_{-i}}^{-1}\Sigma_{P_{-i},i}
  \;\succeq\; 0,
  \label{eq:loss}
\end{equation}
the part of cluster $i$ explained by the other assets beyond their knots.
A scale-free screen for a candidate knot $p\in I_j$ is the largest canonical
correlation between the residual of $I_j\setminus\{p\}$ on $p$ and the
assets outside $I_j$. Both need a large inverse, but they are diagnostics
computed once and are not part of the allocation.
\end{remark}

\begin{remark}[A symmetric factor model]
Suppose instead that every member of cluster $j$ loads on a latent cluster
factor, $r_m=\lambda_m f_j+\eta_m$, with idiosyncratic terms uncorrelated
with each other and with the factors. Then
$R_j(p)=(1-\rho_p)\,\lambda_{J_j}\Cov(f_j,r_{-j})$, where
$\rho_p=\lambda_p^2\Var(f_j)/\sigma_p^2$ is the share of the knot's variance
explained by the factor. The gateway model holds to the extent the knot is
the factor. A knot with a high $\rho_p$ is then the natural choice, and the
member most aligned with the cluster's first principal component
approximates it.
\end{remark}

\section{The bridge}

Following \cite{cotton2024schur}, damp the conditioning. For
$\gamma\in[0,1]$ define the \emph{conditioned pair} of cluster $i$,
\begin{equation}
\begin{aligned}
  Q_i(\gamma) &= \Sigma_{ii}-\gamma\,\Sigma_{i,P_{-i}}\Sigma_{P_{-i}P_{-i}}^{-1}\Sigma_{P_{-i},i},\\
  b_i(\gamma) &= u_{I_i}-\gamma\,\Sigma_{i,P_{-i}}\Sigma_{P_{-i}P_{-i}}^{-1}u_{P_{-i}},
\end{aligned}
  \label{eq:pair}
\end{equation}
and the direction $d_i(\gamma)=Q_i(\gamma)^{-1}b_i(\gamma)$. Let $D_\gamma$
be the $n\times k$ matrix with $d_i(\gamma)$ in column $i$, supported on
$I_i$. The bridge portfolio is
\begin{equation}
  w_\gamma \;\propto\; D_\gamma\big(D_\gamma^\top\Sigma D_\gamma\big)^{-1}D_\gamma^\top u .
  \label{eq:top}
\end{equation}
It solves, on the span of the $k$ directions, the problem that
$\Sigma^{-1}u$ solves on the whole space. Since $b_i(\gamma)=u_{I_i}-\gamma a_i$
with $a_i=\Sigma_{i,P_{-i}}\Sigma_{P_{-i}P_{-i}}^{-1}u_{P_{-i}}$ fixed, a
direction can vanish only at isolated $\gamma$, and there its shape has
the limit $-Q_i(\gamma)^{-1}a_i$. We use that limiting direction, which
keeps the span continuous in $\gamma$; only its scale is lost, and scale
does not enter \eqref{eq:top}. A cluster is dropped only when $u_{I_i}=0$.

\paragraph{Two-tier reading.} The right side of \eqref{eq:top} does not
change when a column of $D_\gamma$ is rescaled. When every total
$\ones^\top d_i(\gamma)$ is nonzero, set $v_i=d_i/\ones^\top d_i$. Then
\eqref{eq:top} is the two-stage method \eqref{eq:twostage} with
$a\propto(V^\top\Sigma V)^{-1}V^\top u$. The induced outer inputs are the
$k\times k$ covariance $V^\top\Sigma V$ and the $k$-vector $V^\top u$, which
is $\ones$ when $u=\ones$. The unnormalized form \eqref{eq:top} has no
singularity when a cluster total crosses zero.

\begin{proposition}[The two ends of the bridge]
\label{prop:bridge}
\leavevmode
\begin{enumerate}
\item At $\gamma=0$ with $u=\ones$ the recipe is NCO with minimum variance
  at both tiers.
\item Under the gateway model the nonvanishing columns of $D_1$ are the
  blocks of $\Sigma^{-1}u$, and $w_1\propto\Sigma^{-1}u$.
\end{enumerate}
\end{proposition}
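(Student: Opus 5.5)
Part 1 is a direct substitution. At $\gamma=0$, \eqref{eq:pair} gives $Q_i(0)=\Sigma_{ii}$ and $b_i(0)=u_{I_i}=\ones$. So $d_i(0)=\Sigma_{ii}^{-1}\ones$, the unnormalized minimum-variance portfolio of the cluster block. Its total $\ones^\top\Sigma_{ii}^{-1}\ones$ is strictly positive because $\Sigma_{ii}\succ0$. Hence the normalization $v_i=d_i/\ones^\top d_i$ in the two-tier reading is always defined, and $v_i$ is exactly NCO's inner portfolio. By the two-tier reading, \eqref{eq:top} then equals $Va$ with $a\propto(V^\top\Sigma V)^{-1}V^\top\ones=(V^\top\Sigma V)^{-1}\ones$. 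This is the minimum-variance allocation on the $k\times k$ covariance $V^\top\Sigma V$, which is positive definite because the columns of $V$ have disjoint supports and are nonzero. That is NCO with minimum variance at both tiers.

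For part 2, I would first apply \Cref{prop:sufficient}. At $\gamma=1$, \eqref{eq:pair} coincides with \eqref{eq:cheap}, so $Q_i(1)=\Sigma^c_{ii}$ and $b_i(1)=b_i$ of \eqref{eq:complement}. Then \eqref{eq:blockinv} gives $d_i(1)=(\Sigma^c_{ii})^{-1}b_i=(\Sigma^{-1}u)_{I_i}$, so every column of $D_1$ is the block of $x:=\Sigma^{-1}u$ on $I_i$, padded by zeros.

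I expect the main obstacle to be the degenerate columns. If $b_i(1)=0$ with $u_{I_i}\neq0$, the recipe substitutes the limiting direction $-Q_i(1)^{-1}a_i$, which is not the (zero) block of $x$. This is why the statement speaks only of the nonvanishing columns. I would handle it by noting that the final claim only requires $x\in\operatorname{span}(D_1)$. The vanishing blocks of $x$ contribute nothing, and each nonvanishing block of $x$ is a column of $D_1$ by the previous step. So $x=\sum_{i:\,x_{I_i}\neq0}d_i(1)$ lies in the span no matter what the substituted columns are, and the same holds for dropped clusters with $u_{I_i}=0$.

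Finally I would identify $w_1$. Write $D=D_1$ and let $P_D=D(D^\top\Sigma D)^{-1}D^\top$. If columns are linearly dependent, for example after replacements, I would restrict to a basis of the span; the expression is basis-invariant, as the rescaling remark indicates. Since $x=Dc$ for some $c$ and $u=\Sigma x$, we get
\[
P_Du=D(D^\top\Sigma D)^{-1}D^\top\Sigma Dc=Dc=x .
\]
This is the statement that the span-restricted problem recovers the global solution whenever the span contains it. Hence $w_1\propto\Sigma^{-1}u$.
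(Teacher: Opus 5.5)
Your proposal is correct and follows the paper's proof essentially step for step: substitution at $\gamma=0$ with a positive cluster total, then \Cref{prop:sufficient} with block inversion to place $\Sigma^{-1}u$ in the column span of $D_1$ whatever the substituted columns are, and finally the identity $D_1(D_1^\top\Sigma D_1)^{-1}D_1^\top\Sigma D_1\alpha=D_1\alpha$. One small simplification: the columns of $D_1$ are nonzero with disjoint supports, so they are automatically linearly independent and your fallback of restricting to a basis is never needed.
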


\begin{proof}
At $\gamma=0$ the pair is $(\Sigma_{ii},\ones)$, so $d_i=\Sigma_{ii}^{-1}\ones$
has a positive total, $v_i$ is the minimum-variance portfolio of the block,
and the outer tier is minimum variance across the cluster portfolios. For
the second claim, \eqref{eq:blockinv} and \cref{prop:sufficient} give
$\Sigma^{-1}u$ as the sum of the nonvanishing columns of $D_1$, which lies
in the column span of $D_1$ whatever replaces a vanishing column. Then
$D_1(D_1^\top\Sigma D_1)^{-1}D_1^\top u
=D_1(D_1^\top\Sigma D_1)^{-1}D_1^\top\Sigma D_1\alpha=D_1\alpha$ for the
$\alpha$ with $\Sigma^{-1}u=D_1\alpha$.
\end{proof}

\begin{table}[ht]
\centering
\begin{tabular}{@{}ll@{}}
\toprule
$\gamma$ & Method \\
\midrule
$0$ & NCO \cite{lopezdeprado2020} \\
$(0,1)$ & the interior of the bridge \\
$1$ & unconstrained global minimum variance, under the gateway model \\
\bottomrule
\end{tabular}
\caption{The bridge with $u=\ones$ and minimum variance at both tiers.}
\label{tab:bridge}
\end{table}

\paragraph{Scope.} The right end is the unconstrained, fully invested,
long-short solution. No condition on cluster totals is needed for
\eqref{eq:top}. A global solution can hold a self-financing position inside
a cluster with zero net weight. The two-tier reading fails there, because
no cluster budget times a normalized cluster portfolio represents it, and
\eqref{eq:top} does not.

\begin{remark}[Other objectives]
\Cref{prop:bridge} is stated for a direction $\Sigma^{-1}u$. With $u=\mu$
the direction $\Sigma^{-1}\mu$ maximizes the Sharpe ratio. Its fully
invested multiple does so only when $\ones^\top\Sigma^{-1}\mu>0$. When that
total is negative the fully invested multiple has the opposite orientation
and the lowest signed Sharpe ratio.

Mean--variance problems with a budget
constraint have solutions that combine $\Sigma^{-1}\mu$ and
$\Sigma^{-1}\ones$, and each part is covered separately. Risk parity is not
of this form. The recipe still runs with a risk-parity outer tier, but
nothing is exact at $\gamma=1$.
\end{remark}

\begin{remark}[Rules that read only a covariance]
An inner rule that cannot take a constraint vector can be applied after a
change of variables. With $y=b_i\circ x$ the constraint $b_i^\top x=1$
becomes $\ones^\top y=1$ and the covariance becomes
$M=\diag(b_i)^{-1}Q_i\diag(b_i)^{-1}$. Apply the rule to $M$ and map back by
$x=y/b_i$, coordinate by coordinate. For minimum variance this returns
$Q_i^{-1}b_i$. It requires $b_i$ to have no zero entry. Applying minimum
variance to $M$ without mapping back returns $\diag(b_i)Q_i^{-1}b_i$, which
is a different portfolio.
\end{remark}

\begin{remark}[Relation to the tree]
Schur complements compose: the complement of a complement is the complement
against the union. So at $\gamma=1$ the recursion of
\cite{cotton2024schur}, run down any tree whose leaves are the clusters
and stopped there, hands each cluster the pair \eqref{eq:complement}. The
flat construction and the iterated one agree at the right end. They differ
in the interior, because the tree damps at every split and the damping
compounds along the path, while \eqref{eq:pair} damps each cluster once
against all the other knots.
\end{remark}

\section{Structure under the gateway model}
\label{sec:structure}

The gateway model gives both the outer covariance and the cluster
directions a simple form.

\begin{proposition}[Knot covariance plus a ridge]
\label{prop:outer}
Under the gateway model, let $v_i$ be any portfolios supported on their
clusters. Write $\beta_i=\Sigma_{J_ip_i}/\sigma_{p_i}^2$ for the member
betas, $E_i=\Sigma_{J_iJ_i}-\beta_i\beta_i^\top\sigma_{p_i}^2$ for the
residual covariance, and $\kappa_i=v_{i,p_i}+\beta_i^\top v_{i,J_i}$ for
the cluster portfolio's beta to its knot. Then
\begin{equation}
  V^\top\Sigma V \;=\; K\,\Sigma_{PP}\,K + D,
  \qquad K=\diag(\kappa_i),
  \quad D=\diag\!\big(v_{i,J_i}^\top E_i\,v_{i,J_i}\big).
  \label{eq:outer}
\end{equation}
\end{proposition}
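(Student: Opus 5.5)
The plan is to compute the entries $v_i^\top\Sigma v_j$ directly, splitting each cluster's returns into a knot part and a residual part. For each cluster $j$, write $\varepsilon_j = r_{J_j}-\beta_j r_{p_j}$, the residual of the remaining members on their own knot. Then each cluster portfolio return decomposes as
\begin{equation*}
  v_i^\top r_{I_i} \;=\; v_{i,p_i} r_{p_i} + v_{i,J_i}^\top(\beta_i r_{p_i}+\varepsilon_i)
  \;=\; \kappa_i\, r_{p_i} + v_{i,J_i}^\top\varepsilon_i .
\end{equation*}
This is pure algebra and needs no assumption. The whole argument is done at the level of covariances, so Gaussianity is not used, consistent with the remark after \cref{cor:gaussian}.

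Next I would record the covariances of the pieces. By construction $\Cov(\varepsilon_i, r_{p_i})=\Sigma_{J_ip_i}-\beta_i\sigma_{p_i}^2=0$. Also $\Var(\varepsilon_i)=\Sigma_{J_iJ_i}-\beta_i\Sigma_{p_iJ_i}-\Sigma_{J_ip_i}\beta_i^\top+\beta_i\beta_i^\top\sigma_{p_i}^2=E_i$. The gateway model says that $\Cov(\varepsilon_i,r_{-i})=R_i(p_i)=0$, so $\varepsilon_i$ is uncorrelated with every asset outside $I_i$. In particular it is uncorrelated with $r_{p_j}$ for $j\neq i$, and with $\varepsilon_j=r_{J_j}-\beta_jr_{p_j}$ for $j\ne i$, since the latter is a linear combination of assets outside $I_i$.

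Expanding bilinearly then gives
\begin{equation*}
  v_i^\top\Sigma v_j=\kappa_i\kappa_j\,\Sigma_{p_ip_j}+\kappa_i\Cov(r_{p_i},v_{j,J_j}^\top\varepsilon_j)+\kappa_j\Cov(v_{i,J_i}^\top\varepsilon_i,r_{p_j})+v_{i,J_i}^\top\Cov(\varepsilon_i,\varepsilon_j)v_{j,J_j}.
\end{equation*}
The two cross terms vanish in every case. For $i=j$ this is because the residual is orthogonal to its own knot, and for $i\ne j$ it is because of the gateway model. The last term is $v_{i,J_i}^\top E_iv_{i,J_i}$ when $i=j$ and zero otherwise. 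Assembling these entries gives $K\Sigma_{PP}K+D$.

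The only step needing care is the off-diagonal residual–residual covariance. It is exactly here that the full condition $R_j(p_j)=0$ is used, rather than mere orthogonality of residuals to the knots, as the testing remark warns. I would check explicitly that $\Cov(\varepsilon_i,\varepsilon_j)=R_i(p_i)$ restricted to the columns of $J_j$, minus that restricted to the column $p_j$ times $\beta_j^\top$. Both restrictions are zero.
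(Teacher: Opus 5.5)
Your proof is correct and is essentially the paper's argument. The paper applies $\Sigma_{J_i,l}=\beta_i\Sigma_{p_i,l}$ (for $l\notin I_i$) on both sides of each cross-cluster block, and uses the orthogonality of knot and member residual on the diagonal; that is your decomposition $r_{J_i}=\beta_i r_{p_i}+\varepsilon_i$ written in matrix form rather than in terms of returns.
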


\begin{proof}
Under the gateway model $\Sigma_{J_i,l}=\beta_i\Sigma_{p_i,l}$ for every
asset $l$ outside $I_i$. Applying this on both sides gives
$v_i^\top\Sigma_{I_iI_l}v_l=\kappa_i\kappa_l\Sigma_{p_ip_l}$ for $i\neq l$.
On the diagonal, the knot is uncorrelated with the member residual, so
$v_i^\top\Sigma_{ii}v_i=\kappa_i^2\sigma_{p_i}^2+v_{i,J_i}^\top E_iv_{i,J_i}$.
\end{proof}

The outer problem sees the knot covariance, scaled by the cluster
portfolios' knot betas, plus their residual variances on the diagonal. The
diagonal term raises the smallest eigenvalue of the outer covariance by at
least the smallest residual variance. The remaining members supply a ridge, which
is one reading of why the outer tier of NCO is stable.

\begin{proposition}[Only the knots move]
\label{prop:knotsmove}
Under the gateway model, split the direction $d_i(\gamma)$ into its member
part $d_{i,J_i}$ and its knot exposure
$\kappa_i=d_{i,p_i}+\beta_i^\top d_{i,J_i}$. Then
\begin{equation}
  d_{i,J_i}(\gamma)=E_i^{-1}\big(u_{J_i}-\beta_i\,u_{p_i}\big),
  \qquad
  \kappa_i(\gamma)=\frac{u_{p_i}-\gamma\,a_i}{\sigma_{p_i}^2-\gamma\,c_i},
  \label{eq:knotsmove}
\end{equation}
where $a_i=\Sigma_{p_i,P_{-i}}\Sigma_{P_{-i}P_{-i}}^{-1}u_{P_{-i}}$ and
$c_i=\Sigma_{p_i,P_{-i}}\Sigma_{P_{-i}P_{-i}}^{-1}\Sigma_{P_{-i},p_i}$. The
member part depends on neither $\gamma$ nor any other cluster.
\end{proposition}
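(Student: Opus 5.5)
The plan is to change coordinates within cluster $i$ so that the knot and the member residuals decouple. Then $Q_i(\gamma)$ becomes block diagonal, and the linear system $Q_i(\gamma)d=b_i(\gamma)$ splits into a member block and a scalar knot equation. The first step is to record what the gateway model does to the correction terms in \eqref{eq:pair}. Cluster $i$'s rows of $\Sigma_{i,P_{-i}}$ satisfy $\Sigma_{J_i,P_{-i}}=\beta_i\Sigma_{p_i,P_{-i}}$, since $R_i(p_i)=0$ and $P_{-i}$ lies outside $I_i$. Writing $g=\Sigma_{P_{-i}P_{-i}}^{-1}\Sigma_{P_{-i},p_i}$, the correction is therefore
\[
\Sigma_{i,P_{-i}}\Sigma_{P_{-i}P_{-i}}^{-1}\Sigma_{P_{-i},i}=c_i\,e e^\top,
\qquad e=\begin{pmatrix}1\\ \beta_i\end{pmatrix},
\]
in the ordering (knot, members). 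For the same reason the correction to $b_i$ is $a_i e$. Hence
\[
Q_i(\gamma)=\Sigma_{ii}-\gamma c_i\,ee^\top,\qquad b_i(\gamma)=u_{I_i}-\gamma a_i e .
\]
Only the other knots enter here, so nothing beyond the gateway condition on cluster $i$ itself is used.

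Next, $\Sigma_{ii}$ has the same knot-plus-residual form by the definition of $\beta_i$ and $E_i$: $\Sigma_{ii}=\sigma_{p_i}^2 ee^\top+\begin{pmatrix}0&0\\0&E_i\end{pmatrix}$. It follows that
\[
Q_i(\gamma)=(\sigma_{p_i}^2-\gamma c_i)\,ee^\top+\begin{pmatrix}0&0\\0&E_i\end{pmatrix}.
\]
Introduce the unimodular change of variables $T=\begin{pmatrix}1&0\\ -\beta_i& I\end{pmatrix}$, so that $T^\top e=(1,0)^\top$. Then $T^\top Q_i(\gamma)T=\diag(\sigma_{p_i}^2-\gamma c_i,\;E_i)$. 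With $d=Ty$, we have $y_1=d_{p_i}+\beta_i^\top d_{J_i}=\kappa_i$ and $y_{J}=d_{J_i}$. Indeed the second component of $T^{-1}d$ is $d_{J_i}+\beta_i d_{p_i}$ at first sight, so I will carefully choose $T$ versus $T^{-\top}$. The variables I want are $\kappa$ and $d_J$, and $d=(\kappa-\beta^\top d_J,\,d_J)$ corresponds to $T'=\begin{pmatrix}1&-\beta^\top\\0&I\end{pmatrix}$ acting on $(\kappa,d_J)$. Then $e^\top T'=(1,0)$, as required.

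For the right-hand side, $T'^\top b_i(\gamma)$ has first entry $u_{p_i}-\gamma a_i$ and member entries $u_{J_i}-\beta_i u_{p_i}$. The $\gamma a_i e$ term contributes $\gamma a_i(-\beta_i+\beta_i)=0$ to the member entries, which is exactly why $\gamma$ drops out. Solving the block-diagonal system gives \eqref{eq:knotsmove} directly.

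The main obstacle is twofold. One part is bookkeeping: choosing the congruence whose variables are exactly $(\kappa_i,d_{J_i})$, and verifying the cancellation in the transformed right-hand side. The other is invertibility. Here $E_i\succ0$ because it is a Schur complement of $\Sigma_{ii}\succ0$, and $\sigma_{p_i}^2-\gamma c_i\ge\sigma_{p_i}^2-c_i>0$ because $c_i$ is the variance of $p_i$ explained by $P_{-i}$, which is strictly less than $\sigma_{p_i}^2$ when $\Sigma\succ0$. At isolated $\gamma$ where $d_i$ vanishes, the paper's limiting convention is not covered by this formula, and I would note that the statement refers to the unnormalized solution $Q_i^{-1}b_i$.
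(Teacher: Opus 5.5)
Your proof is correct and follows the paper's route. You use the gateway model to write both corrections as rank one along $(1,\beta_i^\top)^\top$, then pass to the coordinates $(\kappa_i,d_{J_i})$, in which $Q_i(\gamma)$ becomes $\diag(\sigma_{p_i}^2-\gamma c_i,E_i)$ and the transformed $b_i(\gamma)$ loses its $\gamma$-dependence in the member entries; the paper states this as decoupling of the quadratic form and the constraint, while you state it as a congruence of the linear system, which is the same computation. In a write-up, drop your first $T$, for which $T^\top e=(1-\beta_i^\top\beta_i,\beta_i^\top)^\top\neq(1,0)^\top$ and the claimed diagonalization fails, and keep only the corrected $T'$.
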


\begin{proof}
Under the model $\Sigma_{I_i,P_{-i}}=\binom{1}{\beta_i}\Sigma_{p_i,P_{-i}}$,
so $Q_i(\gamma)=\Sigma_{ii}-\gamma c_i\binom{1}{\beta_i}\binom{1}{\beta_i}^{\!\top}$
and $b_i(\gamma)=u_{I_i}-\gamma a_i\binom{1}{\beta_i}$. In the coordinates
$(\kappa,x_J)$ the quadratic form $x^\top Q_ix$ is
$(\sigma_{p_i}^2-\gamma c_i)\kappa^2+x_J^\top E_ix_J$ and the constraint
$b_i^\top x$ is $(u_{p_i}-\gamma a_i)\kappa+(u_{J_i}-\beta_iu_{p_i})^\top x_J$.
The two parts decouple, and $Q_i^{-1}b_i$ is \eqref{eq:knotsmove}.
\end{proof}

NCO and the global optimum hold the same member position inside every
cluster, up to the cluster's scale. Moving along the bridge reprices only
the knot. With $u=\ones$ a member earns weight in proportion to $1-\beta$,
its shortfall in beta to the knot, per unit of residual variance.

\begin{corollary}[A cluster is a knot and one independent asset]
\label{cor:compress}
Take $u=\ones$ and put
$\delta_i=(\ones-\beta_i)^\top E_i^{-1}(\ones-\beta_i)$. Under the gateway
model, replace each cluster by its knot together with one asset that is
uncorrelated with everything and has variance $1/\delta_i$. Then
$V_0(\gamma)$ is unchanged for every $\gamma$. So is $F(\gamma,\tau)$ of
\cref{sec:where} under noise that perturbs only the knot covariance and
preserves the gateway structure.
\end{corollary}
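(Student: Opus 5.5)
The plan is to show that everything entering $V_0(\gamma)$ and $F(\gamma,\tau)$ is a function of three objects only: the knot covariance $\Sigma_{PP}$, the knot exposures $\kappa_i(\gamma)$, and the scalars $\delta_i$. I then check that the compressed model reproduces all three. I read $V_0(\gamma)$ as the true variance of the fully invested bridge portfolio $w_\gamma$ with $u=\ones$. For any $D$ with $M=D^\top\Sigma D\succ0$ and $s=D^\top\ones$, the normalized portfolio $D M^{-1}s/(\ones^\top D M^{-1} s)$ has variance $1/(s^\top M^{-1}s)$. So it suffices to show that $M$ and $s$ agree in the two models.

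First I compute $M$ and $s$ in the original model. By \cref{prop:knotsmove} with $u=\ones$, the member part is $d_{i,J_i}=E_i^{-1}(\ones-\beta_i)$, independent of $\gamma$. The knot exposure is $\kappa_i(\gamma)=(1-\gamma a_i)/(\sigma_{p_i}^2-\gamma c_i)$, where $a_i$ and $c_i$ are built from $\Sigma_{PP}$ alone.

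Applying \cref{prop:outer} with $v_i=d_i$ (the proposition allows any portfolios supported on the clusters, and the normalization $\ones^\top v_i=1$ is not used in its proof) gives $M=K\Sigma_{PP}K+\diag(\delta_i)$. The diagonal term is $d_{i,J_i}^\top E_i d_{i,J_i}=\delta_i$.

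For $s$, write $d_{i,p_i}=\kappa_i-\beta_i^\top d_{i,J_i}$. Then
\[
\ones^\top d_i=\kappa_i+(\ones-\beta_i)^\top d_{i,J_i}=\kappa_i+\delta_i .
\]
So $M$ and $s$ depend only on $(\Sigma_{PP},\kappa,\delta)$.

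Next I check the compressed model. Each cluster there is its knot $p_i$ with the same row of $\Sigma_{PP}$, plus one asset $z_i$ uncorrelated with everything and of variance $1/\delta_i$. This satisfies the gateway model with $\beta_i=0$ and $E_i=1/\delta_i$. Its $a_i$ and $c_i$ are unchanged because they read only $\Sigma_{PP}$ and $u_P=\ones$, so $\kappa_i(\gamma)$ is identical. Its $\delta$ is $(1-0)^2\cdot\delta_i=\delta_i$. The two displayed formulas therefore return the same $M$ and $s$, and $V_0(\gamma)$ agrees.

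For $F(\gamma,\tau)$, the directions are computed from an estimate $\hat\Sigma$ and evaluated under the true $\Sigma$. Because the noise touches only the knot covariance and preserves the gateway structure, $\hat\Sigma$ has the same $\beta_i$ and $E_i$ as $\Sigma$. Hence the estimated member parts equal the true ones, and only $\hat\kappa_i$ changes, through $\hat\Sigma_{PP}$.

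The true-covariance evaluation again uses \cref{prop:outer}, now with $v_i=\hat d_i$. This gives true $M=\hat K\Sigma_{PP}\hat K+\diag(\delta_i)$ and $s_i=\hat\kappa_i+\delta_i$. The same noise applied to $\Sigma_{PP}$ in the compressed model yields the same $\hat\kappa$. So every realization of the noise gives the same realized variance, and any expectation over $\tau$ agrees.

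The main obstacle I expect is bookkeeping rather than algebra. I have to pin down the exact definitions of $V_0$ and $F$ from \cref{sec:where}, and the degenerate cases need care. If $\delta_i=0$, the member part vanishes and the clusters coincide with their knots in both models. If the bridge uses the limiting direction $-Q_i^{-1}a_i$, which happens only where $b_i(\gamma)=0$, I would argue by continuity in $\gamma$: both sides are continuous through such points by the construction in \eqref{eq:top}. If the outer tier is not the formula \eqref{eq:top} but a generic $\operatorname{alloc}$ of the normalized $V^\top\Sigma V$, the same argument applies. In that case $V^\top\Sigma V=S^{-1}MS^{-1}$ with $S=\diag(s_i)$, so it is again a function of $(\Sigma_{PP},\kappa,\delta)$.
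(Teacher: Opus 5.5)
Your proof is correct and follows the paper's argument in more detail. \cref{prop:knotsmove} gives each direction the knot exposure $\kappa_i(\gamma)$, total $\kappa_i+\delta_i$ and residual variance $\delta_i$; \cref{prop:outer} shows the outer tier and the realized variance see a cluster only through these and the knot covariance; and the replacement asset with $\beta=0$, $E=1/\delta_i$ reproduces all three. In the noisy case, make explicit that the outer weights $a\propto\widehat M^{-1}\hat s$ are computed from $\widehat M=\widehat K\,\widehat\Sigma_{PP}\widehat K+\diag(\delta_i)$, which also agrees between the two models by \cref{prop:outer} applied to $\widehat\Sigma$; the realized variance $a^\top M a/(\hat s^\top a)^2$, with your true-covariance $M=\widehat K\Sigma_{PP}\widehat K+\diag(\delta_i)$, then matches realization by realization, whereas your shortcut $1/(s^\top M^{-1}s)$ covers only the noiseless case.
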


\begin{proof}
By \cref{prop:knotsmove} the direction of cluster $i$ has knot exposure
$\kappa_i(\gamma)$, total weight $\kappa_i(\gamma)+\delta_i$ and residual
variance $\delta_i$. By \cref{prop:outer} the outer tier, and the variance
of the resulting portfolio, read the cluster only through these and the
knot covariance. The replacement cluster has $\beta=0$ and $E=1/\delta_i$,
which gives the same three quantities.
\end{proof}

A large cluster interior therefore never obscures the analysis: it enters
as one scalar. If every member has unit beta to its knot then
$\delta_i=0$, the members carry no budget, and $w_\gamma$ is the same for
every $\gamma$ at which the knot exposures are nonzero.

For the rest of this section take $u=\ones$ and write $S=\Sigma_{PP}$ for
the knot covariance, $s_i=S_{ii}$, $h=S^{-1}\ones$ and
$\nu_i=1/(S^{-1})_{ii}$, the variance of knot $i$ left unexplained by the
other knots. In the notation of \cref{prop:knotsmove},
$\nu_i=s_i-c_i$ and $h_i=(1-a_i)/\nu_i$.

\begin{proposition}[Effective damping]
\label{prop:effective}
With $r_i=\nu_i/s_i$,
\begin{equation}
  \kappa_i(\gamma)=(1-\lambda_i)\,\frac{1}{s_i}+\lambda_i\,h_i,
  \qquad
  \lambda_i=\frac{\gamma\,r_i}{1-\gamma+\gamma\,r_i},
  \qquad\text{that is}\qquad
  \frac{\lambda_i}{1-\lambda_i}=r_i\,\frac{\gamma}{1-\gamma}.
  \label{eq:effective}
\end{equation}
\end{proposition}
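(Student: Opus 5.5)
Start from the closed form for the knot exposure in \cref{prop:knotsmove} with $u=\ones$, namely $\kappa_i(\gamma)=(1-\gamma a_i)/(s_i-\gamma c_i)$, and rewrite numerator and denominator in terms of the two endpoint values. The endpoints are read off directly. At $\gamma=0$ we have $\kappa_i(0)=1/s_i$. At $\gamma=1$ we have $\kappa_i(1)=(1-a_i)/(s_i-c_i)=(1-a_i)/\nu_i=h_i$, using the identities $\nu_i=s_i-c_i$ and $h_i=(1-a_i)/\nu_i$ recorded before the statement. Those identities are the Schur-complement formula for the $i$th diagonal of $S^{-1}$, and the block-inversion formula \eqref{eq:blockinv} applied to $S$ with $u=\ones$ at the single index $p_i$. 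I take them as given, but would note in one line that they follow from \eqref{eq:blockinv} on the knot covariance.

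The key step is to write the denominator as a convex-type combination, $s_i-\gamma c_i=(1-\gamma)s_i+\gamma\nu_i$, and the numerator as $1-\gamma a_i=(1-\gamma)\cdot 1+\gamma(1-a_i)$. Then
\[
\kappa_i(\gamma)=\frac{(1-\gamma)s_i\cdot\frac{1}{s_i}+\gamma\nu_i\cdot h_i}{(1-\gamma)s_i+\gamma\nu_i},
\]
which is a weighted average of $1/s_i$ and $h_i$ with weight $\lambda_i=\gamma\nu_i/\big((1-\gamma)s_i+\gamma\nu_i\big)$ on $h_i$. Dividing numerator and denominator by $s_i$ gives $\lambda_i=\gamma r_i/(1-\gamma+\gamma r_i)$, and $1-\lambda_i=(1-\gamma)/(1-\gamma+\gamma r_i)$. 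The odds form $\lambda_i/(1-\lambda_i)=r_i\gamma/(1-\gamma)$ follows by division, for $\gamma<1$.

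There is no real obstacle. The only care needed is well-definedness. Since $\Sigma\succ0$ we have $S\succ0$, hence $\nu_i>0$ and $0<r_i\le1$, so the denominator $(1-\gamma)s_i+\gamma\nu_i$ is positive on $[0,1]$ and $\lambda_i\in[0,1]$. I would remark that $\lambda_i\le\gamma$, which is why it is an effective (weaker) damping. The odds identity should be read at $\gamma=1$ as $\lambda_i=1$.
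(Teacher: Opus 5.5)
Your proof is correct and matches the paper's: you rewrite the denominator of $\kappa_i$ as $(1-\gamma)s_i+\gamma\nu_i$ and the numerator as $(1-\gamma)+\gamma\nu_i h_i$, then divide to read off the weighted average. Your well-definedness check ($S\succ0$ gives $0<r_i\le1$, so the denominator is positive) and the observation $\lambda_i\le\gamma$ are correct details the paper leaves implicit.
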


\begin{proof}
Write the denominator of $\kappa_i$ as $s_i(1-\gamma)+\gamma\nu_i$ and the
numerator as $(1-\gamma)+\gamma\nu_ih_i$, and divide.
\end{proof}

The knot exposure interpolates between its NCO value $1/s_i$ and its
minimum-variance value $h_i$, and $\lambda_i$ is the interpolation actually
achieved. A common $\gamma$ does not damp the clusters equally. A knot that
the other knots nearly explain has small $r_i$, and its exposure barely
moves until $\gamma$ is very close to $1$.

\begin{proposition}[Lost precision]
\label{prop:lost}
Suppose every $\delta_i>0$. Put $\Delta=\operatorname{diag}(\delta_i)$,
$K_\gamma=\operatorname{diag}(\kappa_i(\gamma))$,
$Z=\ones^\top h+\sum_i\delta_i$ and
$W_\gamma=\big(S^{-1}+K_\gamma\Delta^{-1}K_\gamma\big)^{-1}$. The minimum
variance is $V_\star=1/Z$, and
\begin{equation}
  V_0(\gamma)-V_\star=\frac{L_\gamma}{Z\,(Z-L_\gamma)},
  \qquad
  L_\gamma=(h-\kappa)^\top W_\gamma\,(h-\kappa).
  \label{eq:lost}
\end{equation}
\end{proposition}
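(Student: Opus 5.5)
The plan is to reduce everything to the compressed model of \cref{cor:compress} and then read the variance gap as a squared residual norm in the geometry of $\Sigma^{-1}$. By that corollary we may replace each cluster by its knot $p_i$ and one independent asset $m_i$ with variance $1/\delta_i$. This leaves $V_0(\gamma)$ unchanged, and it also leaves the global minimum variance unchanged, since the $\gamma=1$ end is exact under the gateway model by \cref{prop:bridge}. The compressed covariance on $2k$ assets is block diagonal, $\tilde\Sigma=\operatorname{diag}(S,\Delta^{-1})$. Hence $\tilde\Sigma^{-1}\ones=(h,\delta)$, with $\delta$ the vector of the $\delta_i$. This gives $V_\star=1/(\ones^\top h+\sum_i\delta_i)=1/Z$ at once.

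Next I compute the bridge portfolio in these coordinates. By \cref{prop:knotsmove}, column $i$ of $D_\gamma$ has knot weight $\kappa_i(\gamma)$. Its member weight is $E_i^{-1}(1-\beta_i)=\delta_i$, because $\beta=0$ and $E=1/\delta_i$ for the replacement asset. The bridge solves minimum variance over the column span of $D_\gamma$, so $V_0(\gamma)=1/Z_\gamma$ with
\[
Z_\gamma=\ones^\top D_\gamma\big(D_\gamma^\top\tilde\Sigma D_\gamma\big)^{-1}D_\gamma^\top\ones
=(\kappa+\delta)^\top(KSK+\Delta)^{-1}(\kappa+\delta).
\]
Here $KSK+\Delta\succ0$ because every $\delta_i>0$, so no case distinction is needed when some $\kappa_i$ vanishes. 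Since $V_0-V_\star=(Z-Z_\gamma)/(ZZ_\gamma)$, the claim \eqref{eq:lost} is equivalent to the single identity $Z-Z_\gamma=L_\gamma$.

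To prove that identity I use the projection interpretation. The quantity $Z-Z_\gamma$ equals
\[
\min_\alpha\,\big(\tilde\Sigma^{-1}\ones-D_\gamma\alpha\big)^\top\tilde\Sigma\big(\tilde\Sigma^{-1}\ones-D_\gamma\alpha\big).
\]
The reason is that $Z_\gamma$ is the squared $\tilde\Sigma$-norm of the $\tilde\Sigma$-orthogonal projection of $\tilde\Sigma^{-1}\ones$ onto the span of $D_\gamma$, while $Z$ is its full squared norm. Writing $D_\gamma\alpha=(K\alpha,\delta\circ\alpha)$ and substituting $\alpha=\ones+\eta$, the objective becomes
\[
(g-K\eta)^\top S(g-K\eta)+\eta^\top\Delta\eta,\qquad g=h-\kappa .
\]
This is a ridge regression, and its minimum value is $g^\top\big(S^{-1}+K\Delta^{-1}K\big)^{-1}g=g^\top W_\gamma g=L_\gamma$.

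The step I expect to need the most care is this last one. The minimum of the ridge objective is $g^\top\big(S-SK(KSK+\Delta)^{-1}KS\big)g$. Matching it to $W_\gamma$ requires the Woodbury identity, applied with $K$ possibly singular. Woodbury holds without inverting $K$, since it only uses $\Delta^{-1}$ and $S^{-1}$, so I would verify it directly by multiplying out $\big(S^{-1}+K\Delta^{-1}K\big)\big(S-SK(KSK+\Delta)^{-1}KS\big)=I$.
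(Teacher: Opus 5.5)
Your proposal is correct and follows the paper's own proof. Both compress via \cref{cor:compress} to covariance $\operatorname{diag}(S,\Delta^{-1})$, read the precision shortfall $Z-Z_\gamma$ as the squared $\tilde\Sigma$-distance from $(h,\delta)$ to the span of the bridge directions, eliminate the coefficients and apply Woodbury. Your shift $\alpha=\ones+\eta$, the explicit $Z_\gamma=(\kappa+\delta)^\top(K_\gamma SK_\gamma+\Delta)^{-1}(\kappa+\delta)$, and the check that Woodbury needs no inverse of $K_\gamma$ just fill in the step the paper states in one line.
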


\begin{proof}
By \cref{cor:compress} work with $2k$ assets of covariance
$\operatorname{diag}(S,\Delta^{-1})$ and budget vector $(\ones,\ones)$. The
unconstrained precision is $Z$ and its direction is $(h,\delta)$. The bridge
restricts the knot exposures to $K_\gamma a$ and the member weights to
$\Delta a$, and its precision $1/V_0(\gamma)$ falls short of $Z$ by the
squared distance from $(h,\delta)$ to that subspace,
$\min_a\{(h-K_\gamma a)^\top S\,(h-K_\gamma a)
+(\ones-a)^\top\Delta\,(\ones-a)\}$. Eliminating $a$ and applying the
Woodbury identity gives $L_\gamma$.
\end{proof}

Since $W_\gamma\succ0$, the bridge attains the minimum variance exactly when
$\kappa=h$, and
\begin{equation}
  \kappa_i(\gamma)-h_i=\frac{(1-\gamma)\,(1-s_ih_i)}{(1-\gamma)s_i+\gamma\nu_i}.
  \label{eq:kminush}
\end{equation}
So if some $s_ih_i\neq1$ then $\gamma=1$ is the unique global minimizer of
$V_0$ on $[0,1]$. If every $s_ih_i=1$ the whole bridge is already optimal.
The curvature at the end is explicit as well. With
$p_i=\kappa_i'(1)=(s_ih_i-1)/\nu_i$ and
$W_1=\big(S^{-1}+\operatorname{diag}(h_i^2/\delta_i)\big)^{-1}$,
\begin{equation}
  V_0''(1)=\frac{2\,p^\top W_1\,p}{Z^2}.
  \label{eq:curvature}
\end{equation}

\paragraph{Equicorrelated knots.} Let the $k$ knots have unit variance and
common correlation $\rho$, and take $u=\ones$. Then
$a=\rho(k-1)/\big(1+(k-2)\rho\big)$ and $c=\rho\,a$, so every cluster has
\begin{equation*}
  \kappa(\gamma)=\frac{1+(k-2)\rho-\gamma\,\rho\,(k-1)}{1+(k-2)\rho-\gamma\,\rho^2(k-1)} .
\end{equation*}
For $\rho>0$ the knot exposure falls from $1$ at the NCO end to
$1/\big(1+(k-1)\rho\big)$ at the minimum-variance end, which is the scale of
the equicorrelated minimum-variance weight. The more knots and the higher
their correlation, the further the bridge has to travel.

\begin{remark}[Degenerate partitions]
With every asset its own cluster each direction is a multiple of a
coordinate vector, so with the limiting direction at a vanishing the
directions span the whole space at every $\gamma$. With a single cluster
there is nothing to condition on. In both cases $w_\gamma$ is the minimum-variance portfolio for
every $\gamma$, and no model is needed. The bridge has length only for
partitions strictly between the two.
\end{remark}

\section{Where to sit on the bridge}
\label{sec:where}

The cross-knot regression coefficients in \eqref{eq:pair} are estimated, and
$\gamma$ is how much of that estimate to trust. The analysis of the tree in
\cite{cotton2026bridge} transfers in part. This section takes $u=\ones$.

Let $\widehat\Sigma_\tau=\Sigma+\tau E$ be an estimate, where $E$ is a random
symmetric matrix whose law is unchanged by $E\mapsto-E$, and
$\widehat\Sigma_\tau\succ0$ on the range of $\tau$ considered. Every
quantity in the recipe is computed from $\widehat\Sigma_\tau$. The partition
and the knots are held fixed as $\tau$ varies, since reclustering or
reselecting knots from the data can destroy smoothness. The portfolio
$w_\gamma$ of \eqref{eq:top} is normalized by $\ones^\top w_\gamma=1$, and
we assume the normalizing total stays nonzero. Define the expected
out-of-sample variance and its noiseless profile,
\begin{equation}
  F(\gamma,\tau)=\mathbb{E}\big[w_\gamma(\widehat\Sigma_\tau)^\top\,\Sigma\,
  w_\gamma(\widehat\Sigma_\tau)\big],
  \qquad V_0(\gamma)=F(\gamma,0).
  \label{eq:F}
\end{equation}
It is even in $\tau$. We assume $F$ is smooth, put
$G(\gamma)=\tfrac12\,\partial_\tau^2F(\gamma,0)$, and assume the expansion
$F=V_0+\tau^2G+O(\tau^4)$ holds in $C^2$ as a function of $\gamma$.

\begin{proposition}[Incremental noise cost]
\label{prop:incremental}
Let $\Phi(\gamma,\tau)=F(\gamma,\tau)-F(0,\tau)-\big[V_0(\gamma)-V_0(0)\big]$.
Then $\Phi(\gamma,\tau)=\gamma\,\tau^2H(\gamma,\tau)$ for a smooth $H$.
\end{proposition}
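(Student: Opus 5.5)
The plan is to read $\Phi$ as a smooth function that vanishes identically on the two coordinate lines $\gamma=0$ and $\tau=0$ and is even in $\tau$. I would then extract the factors $\gamma$ and $\tau^2$ one at a time by Hadamard's lemma, using only the standing assumption that $F$ is smooth.

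First I would check the two vanishing sets directly from the definition. At $\gamma=0$ the expression is $F(0,\tau)-F(0,\tau)-[V_0(0)-V_0(0)]=0$ for every $\tau$. At $\tau=0$, $F(\gamma,0)=V_0(\gamma)$ and $F(0,0)=V_0(0)$, so $\Phi(\gamma,0)=0$ for every $\gamma$. Evenness of $\Phi$ in $\tau$ is inherited from that of $F$, which comes from the symmetry $E\mapsto-E$ of the noise law. The terms $V_0(\gamma)-V_0(0)$ do not depend on $\tau$ at all.

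Next I would factor out $\gamma$. Since $\Phi(0,\tau)=0$, write
\[
\Phi(\gamma,\tau)=\gamma\,A(\gamma,\tau),\qquad A(\gamma,\tau)=\int_0^1\partial_\gamma\Phi(t\gamma,\tau)\,dt ,
\]
and $A$ is smooth because differentiation under the integral is legitimate for smooth $\Phi$ on the compact range considered. The properties of $\Phi$ pass to $A$. Because $\Phi(\gamma,0)\equiv0$ we have $\partial_\gamma\Phi(\gamma,0)\equiv0$, so $A(\gamma,0)=0$. Because $\Phi$ is even in $\tau$, so is $\partial_\gamma\Phi$, hence so is $A$.

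Then I would factor out $\tau^2$ in two steps. Since $A(\gamma,0)=0$, Hadamard in $\tau$ gives $A=\tau B$ with
\[
B(\gamma,\tau)=\int_0^1\partial_\tau A(\gamma,s\tau)\,ds ,
\]
which is smooth. Since $A$ is even, $\partial_\tau A$ is odd in $\tau$, so $B$ is odd and $B(\gamma,0)=\partial_\tau A(\gamma,0)=0$. A second application gives $B=\tau H$ with $H$ smooth, so $\Phi=\gamma\tau^2H$. One can also note that $H$ is even in $\tau$ and that $H(\gamma,0)=\tfrac12\partial_\tau^2A(\gamma,0)$. This links $H$ to $G$ through $\gamma H(\gamma,0)=G(\gamma)-G(0)$, presumably what the paper uses next.

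The only delicate point is the domain. Hadamard's lemma needs the segments $\{(t\gamma,\tau)\}$ and $\{(\gamma,s\tau)\}$ to stay where $F$ is defined and smooth. This requires that $\widehat\Sigma_\tau\succ0$ and the normalizing total stays nonzero on a rectangle $[0,1]\times(-\tau_0,\tau_0)$, which is star-shaped in each variable separately. I expect this to be the step needing a sentence of justification rather than any real obstacle. The algebraic content is entirely the double vanishing plus evenness.
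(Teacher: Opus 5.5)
Your proposal is correct and is the paper's argument: $\Phi(0,\tau)=0$ gives $\Phi=\gamma\Phi_1$ with $\Phi_1$ smooth, and $\Phi_1(\gamma,0)=0$ together with evenness in $\tau$ gives $\Phi_1=\tau^2H$. You make the Hadamard integrals explicit and prove the step ``even and vanishing at $\tau=0$ implies divisible by $\tau^2$'' in two passes, which the paper takes as standard.
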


\begin{proof}
$\Phi(0,\tau)=0$, so $\Phi=\gamma\,\Phi_1$ with $\Phi_1$ smooth. Also
$\Phi(\gamma,0)=0$, so $\Phi_1(\gamma,0)=0$, and $\Phi_1$ is even in $\tau$,
so $\Phi_1=\tau^2H$.
\end{proof}

On the tree, under noise supported only on the top-level cross-block, the
hierarchical end does not read that block, and the uncentred cost $F-V_0$
already carries the factor $\gamma$ \cite{cotton2026bridge}. Under
whole-matrix noise that fails on the tree as well, because $F(0,\tau)$ is
then noisy.

NCO is different even under cross-cluster noise alone. Its outer tier reads
cross-cluster covariance at $\gamma=0$, so NCO has an estimation cost of its
own, and the factor $\gamma$ appears only after centring at the noisy NCO
end. The centred statement holds for the general noise of this section, and
the cost of moving away from NCO is still of order $\gamma\tau^2$.

\begin{corollary}[Separation from NCO]
\label{cor:separation}
If $V_0'(0)<0$ then $\partial_\gamma F(0,\tau)<0$ for all sufficiently small
$\tau$, and NCO is strictly suboptimal.
\end{corollary}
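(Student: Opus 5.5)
The plan is to differentiate the decomposition of \cref{prop:incremental} in $\gamma$ at $\gamma=0$. By definition,
\begin{equation*}
F(\gamma,\tau)=F(0,\tau)+\big[V_0(\gamma)-V_0(0)\big]+\gamma\,\tau^2H(\gamma,\tau).
\end{equation*}
Since $F(0,\tau)$ does not depend on $\gamma$, differentiating at $\gamma=0$ gives
\begin{equation*}
\partial_\gamma F(0,\tau)=V_0'(0)+\tau^2 H(0,\tau).
\end{equation*}
Here the product rule on $\gamma\tau^2H$ contributes $\tau^2H(0,\tau)+\gamma\tau^2\partial_\gamma H$, and the second term vanishes at $\gamma=0$. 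This step uses only that $H$ is smooth, which \cref{prop:incremental} supplies.

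Next I would bound the correction. Since $H$ is smooth, hence continuous, it is bounded on a compact neighbourhood of $(0,0)$; say $|H(0,\tau)|\le M$ for $|\tau|\le\tau_0$. Then for $|\tau|<\min\{\tau_0,\sqrt{|V_0'(0)|/M}\}$ we get $\partial_\gamma F(0,\tau)\le V_0'(0)+M\tau^2<0$. If $M=0$ any $\tau\le\tau_0$ works.

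For the suboptimality conclusion, $F(\cdot,\tau)$ is differentiable in $\gamma$ on $[0,1]$, and its one-sided derivative at the endpoint $\gamma=0$ is strictly negative. So for small $\gamma>0$ we have $F(\gamma,\tau)<F(0,\tau)$, and $\gamma=0$ does not minimize expected out-of-sample variance over the bridge. That is, NCO is strictly suboptimal at every sufficiently small noise level.

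The only delicate point is the smoothness of $H$ up to the boundary $\gamma=0$, which is needed so that $H(0,\tau)$ exists and is bounded uniformly in small $\tau$. I would take this from \cref{prop:incremental}, whose proof builds $H$ by Hadamard division of smooth functions; that division preserves smoothness on a neighbourhood including $\gamma=0$. If one wants to avoid leaning on it, the stated $C^2$ expansion $F=V_0+\tau^2G+O(\tau^4)$ gives the same conclusion directly. It yields $\partial_\gamma F(0,\tau)=V_0'(0)+\tau^2G'(0)+O(\tau^4)$, which is negative for small $\tau$.
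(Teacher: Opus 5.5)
Your proposal is correct and is essentially the paper's proof: the paper likewise differentiates the decomposition of \cref{prop:incremental} at $\gamma=0$ to obtain $\partial_\gamma F(0,\tau)=V_0'(0)+\tau^2H(0,\tau)$, and stops there. Your bound on $H(0,\tau)$ from continuity and your one-sided-derivative argument for strict suboptimality spell out the steps the paper leaves implicit.
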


\begin{proof}
By \cref{prop:incremental},
$\partial_\gamma F(0,\tau)=V_0'(0)+\tau^2H(0,\tau)$.
\end{proof}

\begin{proposition}[The minimum-variance end]
\label{prop:mvend}
Under the gateway model $V_0'(1)=0$ and
$V_0''(1)=2\,\dot w^\top\Sigma\,\dot w$, where
$\dot w=\partial_\gamma w_\gamma$ at $\gamma=1$. Suppose $\dot w\neq0$. Then
for all sufficiently small $\tau$ the map $F(\cdot,\tau)$ has a unique
stationary point near $1$, a strict local minimizer, at
\begin{equation}
  \tilde\gamma(\tau)\;=\;1-\frac{G'(1)}{V_0''(1)}\,\tau^2+O(\tau^4).
  \label{eq:shift}
\end{equation}
If $G'(1)>0$ this local minimizer is strictly inside the bridge for small
$\tau>0$. If $G'(1)<0$ then $\gamma=1$ is a strict local minimizer of
$F(\cdot,\tau)$ on $[0,1]$. If $G'(1)=0$ the second order is silent and the
fourth or a higher order decides.

Suppose in addition that $\gamma=1$ is the unique global minimizer of $V_0$
on $[0,1]$ and that $F(\cdot,\tau)\to V_0$ uniformly on $[0,1]$. Then for
all sufficiently small $\tau$ these local minimizers are the global optimum
over $[0,1]$.
\end{proposition}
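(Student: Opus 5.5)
The plan has two parts. First settle the noiseless profile at $\gamma=1$ using the variational meaning of the minimum-variance portfolio. Then perturb the stationarity equation in $\tau$ by the implicit function theorem.

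\emph{Noiseless end.} By \cref{prop:bridge}, under the gateway model $w_1$ is the fully invested multiple of $\Sigma^{-1}\ones$, the minimum-variance portfolio. Every $w_\gamma$ satisfies $\ones^\top w_\gamma=1$, so $\ones^\top\dot w=0$ and $\ones^\top\ddot w=0$. Write $V_0(\gamma)=w_\gamma^\top\Sigma w_\gamma$. Then $V_0'(1)=2w_1^\top\Sigma\dot w$. Since $\Sigma w_1\propto\ones$, this is a multiple of $\ones^\top\dot w=0$. Differentiating once more gives $V_0''(1)=2\dot w^\top\Sigma\dot w+2w_1^\top\Sigma\ddot w$. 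The second term vanishes for the same reason. Smoothness of $\gamma\mapsto w_\gamma$ near $1$ follows from \eqref{eq:top}, using the limiting-direction convention and the assumption that the normalizing total stays nonzero. If $\dot w\neq0$, positive definiteness of $\Sigma$ gives $V_0''(1)>0$.

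\emph{Perturbation.} Set $g(\gamma,\tau)=\partial_\gamma F(\gamma,\tau)$. By the assumed $C^2$ expansion, $g=V_0'(\gamma)+\tau^2G'(\gamma)+O(\tau^4)$, with $\partial_\gamma g=V_0''+O(\tau^2)$. Since $F$ is even in $\tau$, write $s=\tau^2$ and regard $g$ as a function of $(\gamma,s)$. At $(1,0)$ we have $g=0$ and $\partial_\gamma g=V_0''(1)>0$. The implicit function theorem then gives a unique zero $\tilde\gamma(s)$ near $1$, smooth in $s$. Differentiating $g(\tilde\gamma(s),s)=0$ gives
\[
\tilde\gamma'(0)=-G'(1)/V_0''(1),
\]
which is \eqref{eq:shift}. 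Here $F$ is understood as smoothly extended slightly past $\gamma=1$, which the formulas allow. Because $\partial_\gamma g>0$ on a neighbourhood, this zero is the only stationary point there and it is a strict local minimizer. The sign cases follow directly. If $G'(1)>0$ then $\tilde\gamma<1$, so the minimizer lies inside the bridge. If $G'(1)<0$ then $\tilde\gamma>1$. In that case $g<0$ on $[1-\epsilon,1]$, so $F(\cdot,\tau)$ is decreasing there and $\gamma=1$ is a strict local minimizer on $[0,1]$. If $G'(1)=0$ the $O(\tau^4)$ term decides.

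\emph{Global statement.} Fix the neighbourhood $N=[1-\epsilon,1]$ on which the local analysis holds. Since $V_0$ is continuous with unique minimizer $1$ on the compact set $[0,1]$, we have $\min_{[0,1]\setminus N}V_0\ge V_0(1)+\eta$ for some $\eta>0$. Uniform convergence makes $F(\cdot,\tau)$ exceed $F(1,\tau)+\eta/2$ off $N$ for small $\tau$. So the global minimizer lies in $N$, where the local analysis identifies it.

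\textbf{Main obstacle.} The step I expect to be delicate is justifying the implicit function step, in particular that $g$ is $C^1$ jointly in $(\gamma,s)$ with $s=\tau^2$. Evenness plus smoothness gives this: a smooth even function of $\tau$ is a smooth function of $\tau^2$, by Whitney's theorem. The extension past $\gamma=1$ also needs care. If the extension is unavailable, I would instead argue on $[1-\epsilon,1]$ by monotonicity of $g$, which is enough.
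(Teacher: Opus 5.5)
Your proposal is correct and follows the paper's proof. The steps match: at the noiseless end you use $\ones^\top\dot w=\ones^\top\ddot w=0$ together with $\Sigma w_1\propto\ones$; you apply the implicit function theorem to $\partial_\gamma F=0$ at $(1,0)$; you read off the three cases from the sign of $G'(1)$; and you use uniform convergence to push the global minimizers into the neighbourhood of $1$. Your reparametrization by $s=\tau^2$, your remark on extending $F$ past $\gamma=1$ (with the one-sided fallback), and your explicit compactness step only make precise what the paper leaves tacit, since it applies the implicit function theorem directly in $(\gamma,\tau)$ and extends $F$ beyond $1$ without comment.
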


\begin{proof}
Since $\ones^\top w_\gamma=1$ we have $\ones^\top\dot w=0$ and
$\ones^\top\ddot w=0$. By \cref{prop:bridge},
$w_1=\Sigma^{-1}\ones/\ones^\top\Sigma^{-1}\ones$, so
$\Sigma w_1\propto\ones$. Hence $V_0'(1)=2\dot w^\top\Sigma w_1=0$, and in
$V_0''(1)=2\dot w^\top\Sigma\dot w+2\ddot w^\top\Sigma w_1$ the second term
vanishes. Apply the implicit function theorem to
$\partial_\gamma F(\gamma,\tau)=0$ at $(1,0)$, where
$\partial_\gamma^2F=V_0''(1)>0$. This gives \eqref{eq:shift} and strict
convexity of $F(\cdot,\tau)$ on a neighbourhood of $1$ that does not depend
on $\tau$.

When $G'(1)<0$ the stationary point lies beyond $1$, so
$F(\cdot,\tau)$ decreases up to the boundary. Under the additional
hypotheses the global minimizers of $F(\cdot,\tau)$ converge to $1$, so for
small $\tau$ they lie in that neighbourhood.
\end{proof}

When every $\delta_i>0$ the uniqueness hypothesis is supplied by
\cref{prop:lost}, unless the whole bridge is already optimal. It remains a
hypothesis when some $\delta_i=0$, since gateway exactness alone does not
exclude another $\gamma<1$ that produces the same portfolio.

The displacement in \eqref{eq:shift} is of second order in the noise, and
the gain from making it is of fourth order. When $G'(1)>0$,
\begin{equation}
  F(1,\tau)-F(\tilde\gamma(\tau),\tau)
  =\frac{G'(1)^2}{2\,V_0''(1)}\,\tau^4+O(\tau^6).
  \label{eq:gain}
\end{equation}
A large gain at finite noise should not be inferred from the local sign.

The structure of the bridge does not determine the sign of $G'(1)$. Both
signs occur, and each is strict in the examples below, so each persists on
a relatively open set within the gateway-model parameterization.

\subsection{Identical clusters}

Give each of $k$ clusters a knot of unit variance and one independent asset
of variance $1/\delta$, and let every pair of knots have correlation $c$.
By \cref{cor:compress} this is less special than it looks. By symmetry a
portfolio is described by one number, the total weight $x$ on the knots,
and with $L=1+(k-1)c$ its variance is
\begin{equation}
  V(x)=\frac1k\Big[L\,x^2+\frac{(1-x)^2}{\delta}\Big],
  \qquad
  V(x)-V(x_\star)=\frac{L+1/\delta}{k}\,(x-x_\star)^2,
  \qquad x_\star=\frac{1}{1+\delta L}.
  \label{eq:scalar}
\end{equation}
The bridge is estimating the split between a correlated basket and an
independent one.

If the estimated knot correlation is $z$, \cref{prop:knotsmove} gives the
member part $\delta$ and the knot exposure
\begin{equation}
  t_\gamma(z)=\frac{1+(k-2)z-\gamma(k-1)z}{1+(k-2)z-\gamma(k-1)z^2},
  \qquad\text{so}\qquad
  x_\gamma(z)=\frac{t_\gamma(z)}{t_\gamma(z)+\delta}.
  \label{eq:tgamma}
\end{equation}
No matrix inversion remains. For $0<z<1$ the exposure is strictly
decreasing in $\gamma$, because $\partial_\gamma t_\gamma(z)$ has the sign
of $-z(k-1)(1-z)\big(1+(k-2)z\big)$.

\paragraph{An exact interior optimum.} Let the estimate be $Z\in\{0,2c\}$
with equal probability, for some $0<c<\tfrac12$. On the branch $Z=0$
damping does nothing. On the branch $Z=2c$ the choice
\begin{equation}
  \gamma_\star=\frac{1+2(k-2)c}{2\,\big[1+(k-3)c\big]}
  \label{eq:gammastar}
\end{equation}
gives $t_{\gamma_\star}(2c)=1/L$ and hence $x=x_\star$, the population optimum.
One branch is constant and the other attains its absolute minimum, uniquely
since $t_\gamma$ is strictly monotone. So $\gamma_\star$ is the unique
global optimum. It lies strictly inside $(0,1)$ and does not depend on
$\delta$.

With two clusters, $c=\tfrac14$ and $\delta=1$ the optimum is
$\gamma_\star=\tfrac23$, and in exact arithmetic $F(0)-F(\tfrac23)=1/576$
and $F(1)-F(\tfrac23)=1/900$.

\paragraph{Both signs in one model.} Now let the noise be small and
symmetric, $Z=c\pm\tau$, with $k=10$ and $c=\tfrac14$. Then \eqref{eq:shift}
becomes
\begin{equation}
  \tilde\gamma(\tau)=1-\frac{8-\delta}{1+13\delta/4}\,\tau^2+O(\tau^4).
  \label{eq:threshold}
\end{equation}
For $\delta<8$ the optimum is interior for small noise, for $\delta>8$ it
stays at full coupling, and at $\delta=8$ higher orders decide. Here $V_0$
has a unique minimizer, again because $t_\gamma$ is monotone, and
$F(\cdot,\tau)\to V_0$ uniformly, so by \cref{prop:mvend} these conclusions
are global on the bridge.

Take $\delta=12$. Then $\tilde\gamma=1+\tau^2/10+O(\tau^4)$ and the
constrained optimum stays at $1$ under small noise. At the larger noise
level $\tau=\tfrac14$, which is $Z\in\{0,\tfrac12\}$, \eqref{eq:gammastar}
gives $\gamma_\star=10/11$. The same symmetric model sits at full coupling
under small noise and moves inside under larger noise.

\paragraph{Why the sign can go either way.} At full coupling
$x_1(Z)=1/\big(1+\delta[1+(k-1)Z]\big)$, which is convex in $Z$. Unbiased
correlation noise therefore biases the expected knot allocation upward.
Damping toward NCO also raises the knot allocation when correlations are
positive. It reduces sensitivity to the noise and can worsen that bias, and
the sign of $G'(1)$ is the outcome of the competition.

\subsection{Less symmetric examples}

\paragraph{A quartic.} Take clusters $\{1,2\}$ and $\{3,4\}$ with knots $1$
and $3$, unit variances except $\Sigma_{33}=4$, a single nonzero
correlation entry $\Sigma_{13}=\tfrac12$, and the estimate
$\widehat\Sigma_{13}=Z\in\{0,1\}$ with equal probability. On the branch
$Z=1$ the second knot exposure is $x=\kappa_2(\gamma)=(1-\gamma)/(4-\gamma)$
and the population variance is
$V(x)=(49x^4-7x^3+36x^2-2x+6)/\big(2(7x^2+3)^2\big)$. Its derivative has the
sign of $P(x)=49x^4+84x^3-21x^2+48x-6$, which is increasing on
$[0,\tfrac14]$ with $P(0)=-6$ and $P(\tfrac14)=1585/256$.

So there is a unique optimum $\gamma_\star\approx0.5587$, strictly inside
the bridge, with $F(0)-F(\tfrac12)=3/1210$ and $F(1)-F(\tfrac12)=5/1452$ in
exact arithmetic. With small symmetric noise $Z=\tfrac12\pm\tau$ the same
covariance has $G'(1)=31779128936/554523204167$ and
$V_0''(1)=48136/3571279$, and \eqref{eq:shift} predicts
$\tilde\gamma(0.05)\approx0.9894$ against a computed optimum of $0.9896$.

\paragraph{Full coupling under asymmetric noise.} Take three clusters, each
a knot and one remaining member that is uncorrelated with everything, with
member variances $1$, $4$ and $8$. Let the knot covariance and its estimates
be
\begin{equation*}
  S=\begin{pmatrix}5&6&5\\6&\tfrac{33}{2}&12\\5&12&\tfrac{25}{2}\end{pmatrix},
  \qquad
  \widehat S_\tau=S\pm\tau\begin{pmatrix}0&-1&0\\-1&0&2\\0&2&0\end{pmatrix},
\end{equation*}
the two signs equally likely. The noise is symmetric, touches only
cross-knot entries, and preserves the gateway model. Exact differentiation
gives $V_0''(1)\approx2.3949$ and
\begin{equation*}
  G'(1)=-\frac{4720238484060245648386156800}{806907939268294475102923640251}
  \approx-0.0058498<0,
\end{equation*}
so by \cref{prop:mvend} full coupling is a strict local minimizer for all
sufficiently small $\tau$. As a spot check, at $\tau=\tfrac1{10}$ and
$\tau=\tfrac1{100}$ exact evaluation gives $F(\gamma,\tau)>F(1,\tau)$ for
$\gamma\in\{0.999,\,0.99,\,0.9,\,0.5,\,0\}$, which is consistent with a
global optimum there but does not prove one.

Neither end owns the bridge. An interior optimum is not a consequence of the
architecture, and a statement that it is generic needs a noise model. The
derivatives quoted in both examples are exact, computed by truncated
bivariate jets in rational arithmetic.

\section{Rank deficiency}
\label{sec:rank}

Everything so far assumes $\Sigma\succ0$. This section allows
$\Sigma\succeq0$.

A singular covariance is one of two kinds. If some null portfolio has a
nonzero total, a fully invested portfolio of zero variance exists. If every
null portfolio is self-financing, the redundant assets can be removed, and
only then is the pseudoinverse formula
$w_\star=\Sigma^\dagger\ones/\ones^\top\Sigma^\dagger\ones$ valid. The
condition is $\ones\in\operatorname{range}(\Sigma)$. For
$\Sigma=\big(\begin{smallmatrix}1&2\\2&4\end{smallmatrix}\big)$ the
portfolio $(2,-1)$ has budget $1$ and variance $0$, while normalizing
$\Sigma^\dagger\ones$ gives $(\tfrac13,\tfrac23)$ with variance $25/9$.

\paragraph{Harmless redundancy breaks the recipe.} Take two clusters, each a
knot and an independent asset, all of unit variance, and make the two knots
duplicates. The budget vector lies in the range of $\Sigma$, the optimum is
$w_\star=(\tfrac16,\tfrac13,\tfrac16,\tfrac13)$ in the order knot, member,
knot, member, and $V_\star=\tfrac13$. For every $\gamma<1$ the pairs are
$Q_i=\operatorname{diag}(1-\gamma,1)$ and $b_i=(1-\gamma,1)$, so both
directions are $(1,1)$ and the bridge holds equal weights.

\begin{table}[ht]
\centering
\begin{tabular}{@{}lr@{}}
\toprule
Construction & Variance \\
\midrule
Global optimum & $1/3$ \\
Bridge at every $\gamma<1$ & $3/8$ \\
Bridge at $\gamma=1$ with each $Q_i^{-1}$ replaced by $Q_i^\dagger$ & $1/2$ \\
\bottomrule
\end{tabular}
\caption{Duplicated knots.}
\label{tab:duplicate}
\end{table}

At full conditioning each knot is completely explained by the other, and
each local pseudoinverse discards its own knot. Together they discard a
common return that is still useful. The singular Schur equations leave some
weights undetermined, and choosing a minimum-norm solution separately in
each cluster does not coordinate those choices.

\Cref{prop:effective} explains the approach to this case. As a knot becomes
redundant $r_i\to0$, so at every fixed $\gamma<1$ its effective damping
tends to zero and all the movement is compressed into a shrinking interval
near $\gamma=1$. The limits do not commute. With the knot covariance
regularized to $\ones\ones^\top+\varepsilon I$,
\begin{equation*}
  \lim_{\varepsilon\downarrow0}\,\lim_{\gamma\uparrow1}V_0=\tfrac13,
  \qquad
  \lim_{\gamma\uparrow1}\,\lim_{\varepsilon\downarrow0}V_0=\tfrac38 .
\end{equation*}
This is more than numerical instability. The parameterization by $\gamma$
is itself singular there.

\begin{proposition}[A bridge in effective exposure]
\label{prop:lambda}
Let the knot covariance $S\succeq0$ have positive diagonal, let
$\ones\in\operatorname{range}(S)$, and let every $\delta_i>0$. Put
$h=S^\dagger\ones$ and, for $\lambda\in[0,1]$,
\begin{equation}
  \bar\kappa_i(\lambda)=(1-\lambda)\,\frac1{s_i}+\lambda\,h_i ,
  \label{eq:lambdapath}
\end{equation}
keep the member parts of \cref{prop:knotsmove}, and allocate across the
resulting cluster portfolios as before. The outer matrix
$\bar K S\bar K+\Delta$ is positive definite, the portfolio is continuous in
$\lambda$, it is NCO at $\lambda=0$, and it is minimum variance at
$\lambda=1$.
\end{proposition}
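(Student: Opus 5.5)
Work throughout in the compressed model of \cref{cor:compress}: $2k$ assets with covariance $\operatorname{diag}(S,\Delta^{-1})$ and budget vector $(\ones,\ones)$. By \cref{prop:outer}, cluster portfolio $i$ has knot exposure $\bar\kappa_i(\lambda)$ and member weight $\delta_i$, so $D=\Delta$. The outer matrix is therefore $\bar K S\bar K+\Delta$, where $\bar K=\operatorname{diag}(\bar\kappa_i)$.

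\textbf{Positive definiteness.} Since $S\succeq0$, the matrix $\bar KS\bar K$ is positive semidefinite. Adding $\Delta\succ0$ (every $\delta_i>0$) gives a positive definite sum, whatever the $\bar\kappa_i$ are, and this holds even where some $\bar\kappa_i$ vanishes. The bridge formula \eqref{eq:top} then reads $w=\bar D(\bar D^\top\Sigma\bar D)^{-1}\bar D^\top\ones$, with $\bar D^\top\ones=\bar\kappa+\delta$ in the compressed coordinates. The direction columns $\bar d_i$ are affine in $\lambda$, because $\bar\kappa$ is affine and the member parts are fixed. The Gram matrix is continuous and invertible on all of $[0,1]$, so the unnormalized portfolio is continuous, indeed rational without poles. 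For the normalized portfolio I would verify that $\ones^\top w=(\bar\kappa+\delta)^\top(\bar KS\bar K+\Delta)^{-1}(\bar\kappa+\delta)$ is positive. This quantity is $\ge0$, and it is zero only if $\bar\kappa+\delta=0$, which is impossible because the member total $\delta_i>0$ enters each entry. That argument still needs care if some $\bar\kappa_i=-\delta_i$. If so, I would fall back to the positivity of the quadratic form, which vanishes only when the vector is zero; a single nonzero entry suffices.

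\textbf{The NCO end.} At $\lambda=0$ we have $\bar\kappa_i=1/s_i$. \Cref{prop:knotsmove} at $\gamma=0$ gives $\kappa_i(0)=u_{p_i}/\sigma_{p_i}^2=1/s_i$ with the same member parts. The columns therefore coincide with $D_0$, and \cref{prop:bridge}(1) identifies the result as NCO. Note that NCO only uses the diagonal blocks, which are nonsingular since the diagonal of $S$ is positive and $E_i\succ0$, as required for $\delta_i$ to be defined.

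\textbf{The minimum-variance end.} At $\lambda=1$ we have $\bar\kappa=h=S^\dagger\ones$. The task is to show that the compressed optimum lies in the span of the columns. With $\ones\in\operatorname{range}(S)$, the optimum over the $2k$ assets with covariance $\operatorname{diag}(S,\Delta^{-1})$ is proportional to $(S^\dagger\ones,\Delta\ones)=(h,\delta)$; this is the pseudoinverse formula, valid exactly under the range condition. That vector is the sum of the columns, $\bar D\ones$, so the argument in the proof of \cref{prop:bridge}(2) applies verbatim: $\bar D(\bar D^\top\Sigma\bar D)^{-1}\bar D^\top\Sigma\bar D\alpha=\bar D\alpha$.

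\textbf{The main obstacle.} The one step needing justification is that, for singular $S$, the $\Sigma$-orthogonal projection still returns the optimum. The identity above used $\Sigma^{-1}\ones=\bar D\alpha$. In the singular case I would instead show directly that $\Sigma_c(h,\delta)=(Sh,\ones)=(\ones,\ones)$, using $SS^\dagger\ones=\ones$ from the range condition. Then $\bar D^\top\ones=\bar D^\top\Sigma_c\bar D\ones$, so the formula returns $\bar D\ones=(h,\delta)$. Optimality of $(h,\delta)$ follows because $\Sigma_c x\propto\ones$ is the KKT condition for the convex problem. Finally, undo the compression: by \cref{cor:compress} and \cref{prop:outer}, the variances and portfolios in the original assets agree, which completes the argument.
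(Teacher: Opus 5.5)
Your proposal is correct and follows the paper's proof. Positive definiteness comes from $\Delta\succ0$, and at $\lambda=1$ your identity $\operatorname{diag}(S,\Delta^{-1})(h,\delta)=(\ones,\ones)$ is the paper's check that $(\diag(h)\,S\,\diag(h)+\Delta)\ones=h+\delta$, so the outer solution is $a=\ones$ and the portfolio is the compressed optimum $(h,\delta)$. The one loose end is in your extra step on the normalizing total: a single entry $\bar\kappa_i+\delta_i$ can vanish, and your fallback never shows that some entry is nonzero. You can close it with $\ones^\top(\bar\kappa+\delta)=(1-\lambda)\sum_i s_i^{-1}+\lambda\,\ones^\top S^\dagger\ones+\sum_i\delta_i>0$, which uses $S^\dagger\succeq0$.
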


\begin{proof}
Positive definiteness follows from $\Delta\succ0$. At $\lambda=1$,
$Sh=\ones$ gives
$\big(\operatorname{diag}(h)\,S\,\operatorname{diag}(h)+\Delta\big)\ones=h+\delta$,
so the outer solution is $a=\ones$ and the portfolio has knot exposures
proportional to $h$ and member weights proportional to $E_i^{-1}(\ones-\beta_i)$,
which is the minimum-variance portfolio of the compressed problem.
\end{proof}

For duplicated knots $\bar\kappa_i=1-\lambda/2$, and the portfolio moves
continuously from variance $3/8$ to $1/3$. When $S\succ0$ this path gives
every cluster the same effective damping $\lambda$, which corresponds to
cluster-specific parameters $\gamma_i=\lambda/\big(r_i+(1-r_i)\lambda\big)$.
It is a different interior path with the same two ends, and we do not claim
it performs better out of sample. It makes a design choice visible: whether
clusters should share a nominal $\gamma$ or an effective $\lambda$.

\begin{remark}[Noise around a singular covariance]
If $\Sigma$ is singular and both $\Sigma+\tau E$ and $\Sigma-\tau E$ are
positive semidefinite for some $\tau>0$, then $E\ker(\Sigma)=0$. For
$v\in\ker\Sigma$ both signs force $v^\top Ev=0$, and a positive semidefinite
matrix with $v^\top(\Sigma+\tau E)v=0$ has $(\Sigma+\tau E)v=0$. Symmetric
additive noise must therefore preserve the structural null space. A singular
sample estimate of a positive definite population covariance is a different
problem, since its empirical null directions are not riskless. In
\cref{prop:mvend} the condition $\dot w\neq0$ becomes
$\dot w^\top\Sigma\dot w>0$.
\end{remark}

\section{Cost}

No step forms an $n\times n$ inverse. The largest linear solve has dimension
$\max(k,\max_i|I_i|)$, as in NCO: the outer tier is $k\times k$ and each
conditioning inverse is $(k-1)\times(k-1)$. The bridge adds work that NCO
does not do: the cross-block products $\Sigma_{i,P_{-i}}$, and a
leave-one-knot-out inverse per cluster, obtainable from $\Sigma_{PP}^{-1}$
by a rank-one downdate.

If the gateway model fails, \eqref{eq:cheap} approximates the full
complement in \eqref{eq:complement}, and $\Delta_i$ in \eqref{eq:loss} is
what is lost.

\section{Closing}

Nested clustered optimization has been treated as belonging to a different
family from optimization proper: a clustering heuristic that quarantines the
instability of Markowitz rather than a relaxation of it. This paper shows
otherwise. Under a rank-one model of cross-cluster dependence, a Schur
relaxation places NCO and the global minimum-variance portfolio at the two
ends of a one-parameter family of portfolios, at a cost no greater than
NCO's own.

The family has structure. Along it only the knots move, each by an explicit
fraction of its journey, and the variance given up at every setting is in
closed form. It also comes with conditions for where to sit under
estimation error. The sign of one derivative at the minimum-variance end
decides whether remaining at that endpoint is optimal out of sample, and a
symmetric family exhibits both outcomes in a single model, with the optimal
coupling exact.

The separation reasserts itself when the knot covariance is rank deficient.
There the parameterization by $\gamma$ is itself singular, and NCO and the
optimum stay apart on it, though a path in effective exposure still joins
the two ends. The ingredients throughout are old: block inversion, the
truncation of a conditioning set as in Vecchia approximations, and the
damping of \cite{cotton2024schur}.

Two directions are open. \Cref{prop:sufficient} may serve
Schur-complementary allocation on the tree, where the complement against a
sibling block could be taken against that block's knots. The whole-matrix
analysis of \cite{cotton2026bridge} rests on a scalar reduction specific to
the tree and needs a replacement here. We leave both, and an empirical
comparison, to future work.

\end{document}